\documentclass[11pt]{article}
\usepackage[margin=1in]{geometry}
\usepackage[utf8]{inputenc}
\usepackage{graphicx,xcolor,xspace}
\usepackage{pifont}
\usepackage{mathpazo}
\usepackage{amsthm,amsmath,amssymb,mathtools}
\usepackage{tcolorbox}
\tcbuselibrary{skins,breakable}
\tcbset{enhanced jigsaw}
\usepackage[pagebackref=true,colorlinks,citecolor=olive,
	bookmarks,bookmarksopen,bookmarksnumbered]
	{hyperref}
\usepackage[noabbrev,nameinlink]{cleveref}
\newtheorem{theorem}{Theorem}
\newtheorem{lemma}{Lemma}[section]
\newtheorem{definition}[lemma]{Definition}
\newtheorem{observation}[lemma]{Observation}
\allowdisplaybreaks

\title{A Deterministic Near-Linear Time Approximation Algorithm\\for Restricted Shortest Paths in Directed Graphs}
\author{Vikrant Ashvinkumar \thanks{INSAIT, Sofia University ``St. Kliment Ohridski''}}
\date{}
\newcommand{\bracket}[1]{\left[#1\right]}

\newcommand{\card}[1]{\left\vert{#1}\right\vert}
\newcommand{\set}[1]{\left\{ #1 \right\}}

\newcommand{\IR}{\ensuremath{\mathbb{R}}}

\newcommand{\IN}{\ensuremath{\mathbb{N}}}
\newcommand{\prob}[1]{\mathbb{P}\bracket{#1}}
\newcommand{\expect}[1]{\Exp\bracket{#1}}
\DeclareMathOperator*{\Exp}{\ensuremath{{\mathbb{E}}}}
\newcommand{\algname}[1]{\underline{\textbf{\textsc{#1}}}\\}
\newenvironment{tbox}{\begin{tcolorbox}[
		enlarge top by=5pt,
		enlarge bottom by=5pt,
		 breakable,
		 boxsep=2pt,
                  left=5pt,
                  right=7pt,
                  top=10pt,
                  arc=0pt,
                  boxrule=1pt,toprule=1pt,
                  colback=white
                  ]%
	}
{\end{tcolorbox}}
\newcommand{\eps}{\ensuremath{\varepsilon}}
\newcommand{\Ot}{\ensuremath{\widetilde{O}}}
\newcommand{\cQ}{\mathcal{Q}}
\newcommand{\rsp}{\mathsf{RSP}}
\newcommand{\pois}{\textrm{Pois}\xspace}
\newcommand{\expo}{\textrm{exp}\xspace}
\newcommand{\expod}{\textrm{Exp}\xspace}
\newcommand{\lclock}{X_{\textrm{length}}\xspace}
\newcommand{\cclock}{X_{\textrm{cost}}\xspace}
\newcommand{\llambda}{\lambda_{\textrm{length}}\xspace}
\newcommand{\clambda}{\lambda_{\textrm{cost}}\xspace}
\newcommand{\dpt}{\textrm{DP}\xspace}
\newcommand{\lclockt}{Q_{\textrm{length}}\xspace}
\newcommand{\cclockt}{Q_{\textrm{cost}}\xspace}
\newcommand{\lH}{H_{\textrm{length}}\xspace}
\newcommand{\cH}{H_{\textrm{cost}}\xspace}
\newcommand{\explode}{\textcolor{orange}{\ding{89}}\xspace}

\begin{document}
\maketitle
\begin{abstract}
    In the restricted shortest paths problem, we are given an $n$-vertex $m$-edge graph $G$ whose edges are assigned two non-negative weights (or resources) called lengths and costs, a source vertex $s$, and a cost threshold $C$.
    The goal is to find, for each target vertex $t$, the length of the shortest $(s,t)$-path whose total cost is at most $C$.

    \cite{ashvinkumar2025faster} gave the first randomized bicriteria $(1+\eps, 1+\eps)$-approximate algorithm on \emph{directed} graphs running in sub-$mn$ time, answering a problem posed by \cite{Bernstein12}.
    We substantially improve this on two fronts, supplying a \emph{deterministic} bicriteria $(1+\eps, 1+\eps)$-approximate algorithm running in \emph{near-linear} time.

    Our approach straightforwardly generalizes to more than two resources.
    For $k$ resources, a deterministic $\underbrace{(1+\eps, \ldots, 1+\eps)}_{k\ \mathrm{times}}$-approximate algorithm running in \emph{near-linear} time is possible.
\end{abstract}

\textbf{AI Statement. }
Some version of \Cref{thm:main} was elicited through numerous discussions with GPT-6 Astra and older GPT models, under human guidance:
I suggested using the two Poisson clocks, and generalizing the $(2+\eps,2)$-approximate algorithm with them; the machine fleshed out details, and also supplied the derandomization.
I bear all responsibilities that attend this result.

\section{Introduction}

The restricted shortest paths ($\rsp$) problem puts a natural and practically relevant spin on a truly classical problem: shortest paths.
In the shortest paths problem, we ask: how quickly can we get from $s$ to $t$?
This is well studied and, to any non-specialist, solved for all intents and purposes.\footnote{The cognoscenti, however, would see an abundance of open problems spanning negative weights \cite{BernsteinNW22}, negative real weights \cite{fineman2024single}, universal optimality \cite{haeupler2024universal}, breaking sorting barriers \cite{duan2025breaking}, etc.}
The simple twist is to ask: how quickly \emph{and cheaply} can we get from $s$ to $t$?

In the $\rsp$ problem, we are given a graph $G=(V,E)$ equipped with \emph{exactly two} non-negative functions $\ell, c$ assigning two unrelated weights (or resources) -- called \emph{lengths} and \emph{costs} --  to the edges of $G$.
For a source-target pair $s,t\in V$, and a cost threshold $C$, the goal is to find the shortest length $(s,t)$-path in $G$ whose total cost is no more than $C$.
While it is barely any harder a problem to state than shortest paths, $\rsp$ models real-world concerns much more potently, with applications spanning quality of service (QoS) routing~\cite{lorenz1998qos, younis2003constraint}, scheduling~\cite{BriskornCLLP10, NaorST07}, and other areas of operations research (see, e.g.,~\cite{GarroppoGT10}).

More generally, the paths we look for may be subject to numerous constraints.
For example, when planning a long journey, there are many factors to take into account: the total price, number of transfers, trip length, carbon footprint, etc.
Problems of this kind are collectively known as \emph{multi-constrained path} problems (see, e.g., \cite{GarroppoGT10, XueZTT08}) or, in this paper, $\rsp$ with $k$ resources when there is one objective and $k-1$ constraints, all of which must be simultaneously satisfied.

Unlike shortest paths, $\rsp$ is known to be NP-hard~\cite{GareyJ79}, which motivates looking for approximate algorithms.
Even with this concession, the complexity of approximation algorithms for $\rsp$ in its full generality is far from being settled~\cite{ashvinkumar2025faster,Bernstein12, goel2001efficient, Hassin92, lorenz2001simple, Warburton87}.

\subsection*{Prior Work}

To discuss the previous work in more detail, let us first introduce some notation.

\begin{definition}
Let $P^*$ be the shortest $(s,t)$-path with cost bounded by $C$ in $G$.
We say that an $(s,t)$-path $P \subseteq G$ constitutes an \emph{$(\alpha,\beta)$-approximate} solution to the $\rsp$ problem, if its length satisfies $\ell(P)\leq \alpha\cdot \ell(P^*)$ and its cost satisfies $c(P)\leq \beta\cdot C$.

We call an $\rsp$ algorithm $(\alpha,\beta)$-approximate if it produces $(\alpha,\beta)$-approximate solutions.
\end{definition}

\cite{goel2001efficient} gave an $\Ot_\eps(mn)$-time\footnote{Throughout this paper, we use the $\Ot(\cdot)$ notation to suppress factors polylogarithmic in $n$, the $O_\eps(\cdot)$ notation to suppress factors in $\eps^{-1}$, and a combination of the two notations.} $(1,1+\eps)$-approximate algorithm for the single-source $\rsp$ problem.
Their result leverages the observation that simple dynamic programming can, in fact, solve any instance of $\rsp$ with integral costs in $O(mC)$ time~\cite{joksch1966shortest}, and combines it with cost rounding/scaling in a standard way.
\cite{lorenz2001simple} showed a $(1+\eps,1)$-approximate algorithm for the single-source single-target case running in $\Ot_\eps(mn)$ time; this essentially follows by swapping the roles of lengths and costs and proceeding similarly\footnote{The algorithm of \cite{lorenz2001simple} requires some extra care to make it run in strongly polynomial time. This is also the reason for the single-target requirement.}
to \cite{goel2001efficient}.  

In the special case of \emph{undirected} graph inputs, \cite{Bernstein12} obtained a randomized $(1+\eps,1+\eps)$-approximate single-source $\rsp$ algorithm running in $O_\eps(1)m^{1+o(1)}$ time.
Apart from the rounding/scaling DP, \cite{Bernstein12} exploits an emulator construction that crucially relies on the input graph being undirected.

Recently, \cite{ashvinkumar2025faster} showed randomized $(1+\eps,1+\eps)$-approximate single-source $\rsp$ algorithms running in $\Ot_\eps(n^2 \log(nW))$ time -- near optimal for dense graphs -- and $\Ot_\eps(mn^{3/5} \log(nW))$ time.
Here, $W$ is the aspect ratio of lengths.\footnote{Formally, the aspect ratio $W$ is the quantity $(\max_{uv\in E_{\ell+}}\ell(uv))/(\min_{uv\in E_{\ell+}}\ell(uv))$, where $E_{\ell+}$ is the subset of edges with positive lengths. In particular, the zero lengths do not influence the aspect ratio.}
The techniques here use the so-called low diameter decomposition to define a frequency function that sparsifies the rounding/scaling DP table.
Crucial to the completeness of the sparsified DP is to run it on the input graph with a so-called sparse approximate hopset, the best known of which has complexity $O(n^{1/3})$ (see \cite{bernstein2022closing}, which includes the definition of hopsets), and for which any construction has complexity $\Omega(n^{1/4})$~\cite{BodwinH23}.
Without explaining the details, this means that pushing the techniques of \cite{ashvinkumar2025faster} to their limits would yield an $\Ot_\eps(mn^{1/4})$ time algorithm at best.
This is the \emph{hopset barrier}, which even more recent work\footnote{Personal correspondence.} approaches, giving a randomized $\Ot_\eps(m\sqrt{n}\log(nW))$ time $(1+\eps,1+\eps)$-approximate algorithm by using, surprisingly, techniques from parallel algorithms to sequentially construct a length-cost hopset.
At any rate, completely new ideas are needed to break the hopset barrier.

\subsection*{Our Results}

The main result in this paper is a \emph{deterministic} and \emph{near-linear} time $(1+\eps,1+\eps)$-approximate algorithm for $\rsp$, resolving the complexity of $\rsp$ up to logarithmic factors.
\begin{theorem}
\label{thm:main}
    There is a deterministic $(1+\eps, 1+\eps)$-approximate algorithm for single-source $\rsp$ that runs in $\Ot(m\log(nW)/\eps^9)$ time, where $W$ is the aspect ratio of lengths.
\end{theorem}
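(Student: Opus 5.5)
We first reduce to a normalized instance. Guessing the optimal length up to a factor $2$ costs $O(\log(nW))$ rounds; in each round we know a value $L$ with $\ell(P^*)\in[L,2L]$. Edges with $\ell(e)>2L$ or $c(e)>C$ are deleted. We then rescale so that $L=C=1$. Every edge now has a \emph{combined weight} $w(e)=\ell(e)+c(e)$, and the target path $P^*$ has $w(P^*)\le 3$. The goal becomes: find a path with $\ell\le(1+\eps)\ell(P^*)$ and $c\le 1+\eps$. Equivalently, we need additive error $O(\eps)$ in each resource, with the length guarantee made relative by the guess of $L$.

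The core idea uses the two ``Poisson clocks'' mentioned in the AI statement. Think of traversing a path and ticking a length clock at rate $\llambda=\Theta(1/\eps^2)$ per unit length, and a cost clock at rate $\clambda=\Theta(1/\eps^2)$ per unit cost. The number of ticks along $P^*$ is then $O(1/\eps^2)$, and the number of ticks along any subpath concentrates around rate times weight. The DP state is $(v,a)$, where $a\in\{0,\dots,O(1/\eps^2)\}$ counts cost-clock ticks so far. We run a Dijkstra-like relaxation on the layered graph $G\times[O(1/\eps^2)]$, with lengths as keys. This is near-linear, with $m/\eps^{O(1)}$ edges. Relative to the standard rounding DP, the gain is that rounding happens \emph{per tick}, not per edge. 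The error is therefore tied to the total weight rather than to the number of hops $n$. That is exactly what removes the hopset barrier.

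To make this deterministic, we replace the random clocks by a deterministic discretization. Along each edge, we track fractional accumulated cost inside the state, quantized to a grid of width $\eps^2$ only at ``tick'' moments. Concretely, the state carries the residual $r\in[0,1)$ in multiples of $\eps$. An edge of cost $c(e)$ advances the counter by $\lfloor (r+c(e)\clambda)\rfloor$ and updates the residual, with the residual rounded \emph{down} to the grid only once per $\Theta(1/\eps)$ ticks. This is analogous to the per-window rounding in the derandomized clock, amortized by a potential argument. The same treatment applies to the length clock when needed: lengths are likewise bucketed so that they only enter through $O(1/\eps^2)$ ticks. The extra state dimensions contribute the remaining $\eps^{-O(1)}$ factors, and we aim to account for them to reach $\eps^{-9}$.

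The key lemmas, in order, are the following.
\begin{itemize}
\item \textbf{Completeness.} The state path induced by $P^*$ reaches $t$ with counter value at most $(1+\eps)\clambda$ and key at most $\ell(P^*)$.
\item \textbf{Soundness.} Any state path reaching $t$ with counter at most $(1+\eps)\clambda$ corresponds to a real path of cost at most $(1+O(\eps))C$.
\item \textbf{Running time.} The bound follows from the layered graph size times Dijkstra, times the $O(\log(nW))$ guesses.
\end{itemize}

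The main obstacle is soundness under rounding. Rounding residuals down, which is needed for completeness, can let an adversarial path with many tiny-cost edges accumulate undercount. The amortization must show that the total lost cost is at most $\eps$ per $\Theta(1/\eps)$ ticks, plus at most one grid step per zero-tick stretch. Zero-tick stretches are the dangerous ones. I would handle them by contracting, via a plain Dijkstra, the subgraph of edges that keep the counter unchanged. I would also charge each residual rounding to a tick of the \emph{combined} clock $w$, whose total count along any output path is $O(1/\eps^2)$ because the key bounds the length and the counter bounds the cost.
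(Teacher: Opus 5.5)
There is a genuine gap. It sits exactly at the step you flag as the main obstacle: the deterministic discretization of the clocks.

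Once the residual $r$ is part of the DP state on a grid of width $\eps$, it must be snapped to the grid after \emph{every} edge, or the state space is not finite. So your rounding is per edge, not ``once per $\Theta(1/\eps)$ ticks''. A per-edge rule on a $\mathrm{poly}(1/\eps)$-size grid cannot be both complete and sound.
\begin{itemize}
\item \emph{Rounding down breaks soundness.} Take a directed path of $N$ edges with tiny lengths and $c(e)\clambda=\eps/2$. Each edge yields $\lfloor 0+\eps/2\rfloor=0$ ticks and residual $\eps/2$, which rounds down to $0$. So $t$ is reached with counter $0$, although the true cost is $N\eps/(2\clambda)=\Theta(N\eps^{3})C$. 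That exceeds $(1+\eps)C$ once $N=\Omega(\eps^{-3})$.
\item \emph{Rounding up breaks completeness.} Edges with $c(e)\clambda=\eps^2$ are each charged $\eps$. A path of $\Theta(\eps^{-4})$ such edges with cost at most $C$ is then charged about $q/\eps$ ticks, far over budget.
\end{itemize}
Your two repairs do not close this.
\begin{itemize}
\item \emph{Contraction.} Every edge in the first example is a zero-tick edge, so contracting the zero-tick subgraph just declares the expensive path free. Tracking the cost inside that subgraph is the original problem again.
\item \emph{Charging to the combined clock.} This is circular. The $O(\eps^{-2})$ bound on combined ticks along an output path presupposes that the counter controls the cost, which is the soundness claim itself. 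Moreover, the number of roundings equals the number of edges, not the number of combined ticks.
\end{itemize}
This is precisely the hop-count error accumulation that forces the classical rounding DP to a grid of width $\eps C/n$ and $\Ot(mn)$ time. Your intuition that the error should scale with total weight rather than with hops is right, but only the randomness of the clocks delivers it.

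The paper never derandomizes the clocks. It keeps the Poisson ticks random and makes the algorithm deterministic by computing exact probabilities. For a guessed $L$, set $q=\Theta(\eps^{-2})$, $\llambda=q/((1+\eps/2)L)$ and $\clambda=q/((1+\eps/2)C)$. Over states $(v,i,j)\in V\times[q]_0^2$ it computes the minimum, over adaptive escape plans, of the probability that either clock reaches $q$:
\[
\dpt(v,i,j)=\min\Bigl\{\dpt(v,i+1,j),\ \dpt(v,i,j+1),\ \min_{uv\in E}\expect{\dpt(u,i+L_{uv},j+C_{uv})}\Bigr\}.
\]
Boundary values are $1$ once $i\ge q$ or $j\ge q$, and $\dpt(s,i,j)=0$ otherwise. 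The table is solved layer by layer with a Dijkstra-type sweep, and $t$ is accepted iff $\dpt(t,0,0)\le 1/4$.

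Two features do the work you wanted from the residual. First, the DP stores an expectation rather than a sample, so minimizing over routes cannot exploit atypically few ticks. Second, $\sum_{uv\in P}L_{uv}\sim\pois(\llambda\ell(P))$ no matter how many edges $P$ has, so concentration holds on arbitrarily long paths. Completeness is then a Poisson tail bound on the fixed path. Soundness couples the ticks along the followed plan with a Poisson process, which adaptivity cannot beat.

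Turning length into a second clock, rather than keeping it as a Dijkstra key as you do, is what makes the objective a single scalar with a Bellman recursion. Witness paths are extracted deterministically by conditional expectations, using the pessimistic estimator $\Phi$ built from the PGF $\expect{z^{L_{uv}}}=e^{-(1-z)\llambda\ell(uv)}$. Finally, the $\eps^{-9}$, which your sketch does not derive, is $q^2$ layers times $m$ relaxations of cost $q^2$ times $O(\log(nW)/\eps)$ guesses.
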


Not only are our results nearly optimal, they come with the added bonus of being very natural and without heavy machinery.
The result is entirely self-contained.
Moreover, there is only one $\log n$ factor being suppressed.
A routine modification of the ideas supplying \Cref{thm:main} yields an analogous algorithm for $\rsp$ with $k$ resources.

\begin{theorem}
\label{thm:general}
    For any fixed value of $k \ge 2$, let $\eps_1 = \ldots = \eps_k = \eps$.
    There is a deterministic $(1+\eps_1, \ldots, 1+\eps_k)$-approximate algorithm for single-source $\rsp$ with $k$ resources that runs in $\Ot(m\log(nW)/\eps^{4k+1})$ time, where $W$ is the aspect ratio of lengths.
\end{theorem}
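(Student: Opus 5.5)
The plan is to reuse the algorithm behind \Cref{thm:main} essentially verbatim, replacing its two clocks (one for lengths, one for costs) by $k$ clocks, one per resource, and checking that every step of the analysis acts coordinatewise. Concretely, let $r_1=\ell$ be the objective and $r_2,\dots,r_k$ the constrained resources with thresholds $C_2,\dots,C_k$. As in the two-resource case, I would first guess the optimal length $L$ up to a factor $1+\eps$; this takes $O(\log(nW)/\eps)$ guesses, which is where the $\log(nW)$ and one factor of $1/\eps$ come from. Set $C_1:=L$ and normalise $r_j' := r_j/C_j$, so that the target path $P^*$ has $r_j'(P^*)\le 1$ for every $j$.

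The $(2+\eps,2)$ warm-up generalises immediately. Run Dijkstra with the single weight $\sum_j r_j'$. The path found has total weight at most $k$, hence it is $(k,\dots,k)$-approximate. This crude step only serves to bound the scale of the instance.

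For the $(1+\eps)$ refinement I would use the derandomised clock $Q_j$ for each resource $j$, exactly as $\lclockt$ and $\cclockt$ are used for two resources. Each $Q_j$ is a deterministic counter that advances in units of $\Theta(\eps)$ of normalised resource along a path, with a bounded carry of fractional resource. It is capped at $O(1)$ units of resource, so it has $O(\eps^{-4})$ states once the carry and rounding granularity needed by the derandomisation are included; this is what the $\eps^{-9}=\eps^{-1}\cdot\eps^{-8}$ of \Cref{thm:main} suggests. The DP then runs on the layered product graph with vertices $(v,q_1,\dots,q_k)$ and edges $(u,\vec q)\to(v,\vec q')$, where $\vec q'$ is obtained by feeding $r_j'(uv)$ to each clock. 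Its size is $m\cdot O(\eps^{-4})^k$. Running the same near-linear shortest-path or reachability routine on it, over all guesses, gives total time $\Ot(m\log(nW)\eps^{-4k-1})$. For fixed $k$ the extra $\log n$ overhead is unchanged.

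Correctness splits into completeness and soundness, and both are per-resource statements.
\begin{itemize}
\item Completeness: $P^*$ lifts to a path in the product graph whose every clock stays below its cap.
\item Soundness: any lifted path whose clocks stay below the caps has $r_j'\le 1+\eps$ for each $j$.
\end{itemize}
In the two-resource proof each claim is proved for one clock, using only the sequence of that resource's weights along the path. The clocks never interact except through the shared path. So the same lemmas apply to each of the $k$ coordinates separately, and a union over $k$ coordinates is harmless because the guarantees are deterministic.

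The step I expect to be the main obstacle is confirming that the two-resource argument really has no hidden coupling between the clocks. The worry is the handling of edges that are heavy in one resource but negligible in another, or the interplay between the Dijkstra scale bound and the clock caps. If such a coupling exists, the fix I would try first is to treat every resource symmetrically, including the objective, using the crude $(k,\dots,k)$ bound from the warm-up. Each clock's error would then be controlled only by its own $r_j'$-mass along the path, which restores independence at the cost of a constant depending on $k$.
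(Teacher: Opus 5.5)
There is a genuine gap: the engine of your construction is not the paper's, and as described it cannot work.

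\textbf{Why the deterministic counters fail.} You read $\lclockt,\cclockt$ as deterministic counters that round each normalized resource to units of $\Theta(\eps)$ with a discretized carry. You then run reachability or Dijkstra on the product graph over $(v,q_1,\dots,q_k)$, and claim completeness and soundness as deterministic per-resource facts. No such counter can be both complete and sound if its update depends only on its state and the edge's weights and it has $N=\mathrm{poly}(1/\eps)$ joint states.
\begin{itemize}
\item Take a directed path whose edges all have normalized weight $1/N$ in every resource.
\item By pigeonhole, the joint counter state along it is periodic from some index $a\le N$ on, with period $p\le N$.
\item So the prefixes with $N$ edges and with $N+jp$ edges, for $j>\eps N/p$, end in the same state.
\item The first endpoint must be accepted, since its unique path has every normalized resource equal to $1$.
\item The second must be rejected, since its unique path has every normalized resource above $1+\eps$.
\end{itemize}
This is exactly the rounding-error accumulation that forces classical deterministic rounding DPs to use $\Theta(n/\eps)$ states per resource, hence $\Ot(mn)$ time. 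Your state count is also a misreading. In the paper each clock has only $q+1=\Theta(\eps^{-2})$ states, and the other $\eps^{-2k}$ factor comes from the $q^k$-term sum evaluated in every key update. $\lclockt,\cclockt$ are tick counters used only in path retrieval, not rounding devices.

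\textbf{The missing idea: a probabilistic DP.} In the paper's sketch there are $k$ Poisson clocks with $q=A_k\eps^{-2}$ and rates $\lambda_j=q/((1+\eps/2)C_j)$. An edge contributes $\pois(\lambda_j r_j(uv))$ ticks to clock $j$.
\begin{itemize}
\item The entry $\dpt(v,i_1,\dots,i_k)$ is a real number: the minimum, over adaptive escape plans, of the probability that \emph{some} clock reaches $q$.
\item It is computed over $q^k$ tick layers in decreasing order, each by a Dijkstra run whose relaxation is an expectation over all joint tick outcomes.
\item A target is accepted if $\dpt(t,0,\dots,0)\le 1/4$.
\end{itemize}
This real-valued state is what escapes the finite-state obstruction above: Poisson ticks are an unbiased rounding whose error concentrates along long paths.

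\textbf{Correctness is probabilistic, not per-coordinate.} Completeness is a union bound of $k$ Poisson upper tails along $P^*$. Soundness bounds the failure probability of random retrieval by $1/4$ plus $k$ lower tails $\prob{\pois(\lambda_j(1+\eps)C_j)<q}$. The clocks are genuinely coupled, since the DP minimizes a joint extinction probability over adaptive plans. So the coupling you worried about is real and is not removed. It is paid for by a union bound over $k+1$ events, which is why $A_k$ must grow with $k$ (roughly like $\log k$) so that each tail is $O(1/k)$.

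\textbf{Deterministic retrieval.} Paths are extracted by conditional expectations with a potential of the form $\dpt+\sum_j z^{i_j}/\bigl(c_k\prod_{xy\in P}\expect{z^{R_{j,xy}}}\bigr)$, where $R_{j,xy}$ is clock $j$'s tick count on $xy$. The weights $1/c_k=\Theta(1/k)$ keep the initial potential below $1$, while any violated resource forces $\Phi\ge 1$.

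None of these ingredients appear in your plan, and your $(k,\dots,k)$ Dijkstra warm-up plays no role in the paper's argument.
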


While it remains open to give a strongly polynomial near-linear time algorithm (in fact, sub-$mn$ time), standard techniques applied to our results give strongly polynomial near-linear time algorithms for the single-source single-target variant (see \cite{ashvinkumar2025faster}).
For expositional clarity, we did not try to optimize the dependence on $1/\varepsilon$; it can certainly be improved.

\subsection*{Organization and Presentation}

The focus of this paper is primarily to convey the \emph{ideas} which supply \Cref{thm:main}.
For this reason, we omit the details of \Cref{thm:general}; they do not involve any new ideas given those of our main result.
A simplifying assumption we make in this paper is that lengths and costs are non-zero; this can be easily fixed by defining a $\mathrm{Pois}(0)$ random variable to be deterministically $0$, and doing some minor extra bookkeeping.
We also assume $\eps < 1$.
To keep the main ideas in focus, we present an algorithm which makes access to infinite precision real numbers (i.e. real RAM computation).
Truncating Poisson probabilities to $O(\log(n/\eps))$ bits of precision allows for implementation in the word RAM with a polylogarithmic running time overhead.
The paper is organized as follows:
\begin{itemize}
    \item Preliminaries are in \Cref{sec:preliminaries}.
    \item The high-concept explanation of our new ideas is provided in \Cref{sec:overview}.
    \item The formal algorithm is in \Cref{sec:algorithm}.
        This is a deterministic algorithm that outputs distances for each $t \in V$, and also outputs a data structure of size $\Ot_\eps(m \log(nW))$ from which witness paths can be retrieved \emph{randomly}.
    \item The analysis of the algorithm is in \Cref{sec:analysis}.
    \item While distances and the representation of paths are deterministically constructed, the random retrieval of paths suggests we have told a half-truth in claiming a deterministic algorithm.
        We show how to derandomize path retrieval in \Cref{sec:derandomization}.
\end{itemize}

\section{Preliminaries}
\label{sec:preliminaries}

\paragraph{Numbers and Sets. }
For a positive integer $n \in \IN$, we denote $\set{1, 2, \ldots, n}$ with $[n]$.
We use $[n]_0$ to denote $[n] \cup \set{0}$.
$e \approx 2.718$ is Euler's number (never an edge!).

\paragraph{Graphs. }
Throughout the paper, we consider directed graphs $G = (V, E)$.
When it is clear, we use $n = \card{V}$ and $m = \card{E}$.
We denote an edge directed from vertex $u$ to vertex $v$ with $uv$.
A path $P$ is a sequence of vertices $v_1, \ldots, v_h$ such that $v_i v_{i+1}$ are edges for all $i \in [h - 1]$; we call $P$ a $(v_1,v_h)$-path and we sometimes treat it as the (multi-) set of its edges.
We concatenate a vertex $v$ to the start of $P$ by $v \cdot P$.
If we write $s \in V$ or $t \in V$ and do not qualify them further, they stand for ``source'' and ``target'', respectively.

The graphs in this paper will come equipped with lengths $\ell: E \rightarrow \IR_{> 0}$ and costs
${c: E \rightarrow \IR_{> 0}}$.
The lengths and costs of a path $P$ is written with $\ell(P) = \sum_{uv \in P} \ell(uv)$ and $c(P) = \sum_{uv \in P} c(uv)$ respectively.

\paragraph{Probability. }
We denote by $X \sim \pois(\lambda)$ if $X$ is a Poisson random variable with mean $\lambda$ and $Y \sim \expod(\lambda)$ if $Y$ is an exponential random variable with rate $\lambda$.
See \cite{mitzenmacher2017probability} for more on these.

\section{Key Ideas and Motivation}
\label{sec:overview}

In this section we describe the main ideas behind our result.
For clarity of exposition, we sacrifice some mathematical precision in the ensuing discussion.

\subsection*{The Sum Trick and the Shared Sum Stalemate}

Consider the following simple near-linear time $(2+\eps,2)$-approximate algorithm for $\rsp$.
\begin{tbox}
    For $O(\log(nW)/\eps)$ guesses $L = (1+\eps)^i$:
    \begin{enumerate}
        \item $\ell_i \gets \ell/L$.
        \item $c_i \gets c/C$.
        \item Run Dijkstra's algorithm from $s$ on $G$ with edge weights $\ell_i + c_i$.
    \end{enumerate}
    The distance estimate of $t \in V$ is the smallest $L=(1+\eps)^i$ for which the Dijkstra execution found $t$ to have length-plus-cost distance $d_i(t) \le 2$.
\end{tbox}
The length-cost shared sum was shown to be a meaningful quantity in \cite{ashvinkumar2025faster}, and here we can see why by noting:
\begin{align*}
    &\textrm{(Completeness) If there is an } (s,t)\textrm{-path } P \textrm{ with } \ell(P) \le L \textrm{ and } c(P) \le C \textrm{ then } d_i(t) \le 2.
    \\
    &\textrm{(Soundness)  If } d_i(t) \le 2 \textrm{ then there is an } (s,t)\textrm{-path } P \textrm{ with } \ell(P) \le 2(1+\eps)^i \textrm{ and } c(P) \le 2C.
\end{align*}
It is also worth noting the following nice feature of this algorithm: witness paths are represented by a data structure of size $\Ot_\eps(n \log(nW))$, analogously to how shortest path algorithms produce a shortest path tree.
Indeed, if the distance estimate of $t$ is $L$, then its witness path is the $(s,t)$-path in the shortest path tree produced by the Dijkstra execution for guess $L$.
Some form of this will show up in our main result.

At any rate, a natural thing to try is to drive one of the approximation factors down by a weighted summation.
\begin{tbox}
    For $O(\log(nW)/\eps)$ guesses $L = (1+\eps)^i$:
    \begin{enumerate}
        \item $\ell_i \gets \ell/(\textcolor{red}{\alpha} L)$.
        \item $c_i \gets c/(\textcolor{red}{\beta} C)$.
        \item Run Dijkstra's algorithm from $s$ on $G$ with edge weights $\ell_i + c_i$.
    \end{enumerate}
    The distance estimate of $t \in V$ is the smallest $L=(1+\eps)^i$ for which the Dijkstra execution found $t$ to have length-plus-cost distance $d_i(t) \le \textcolor{red}{1/\alpha + 1/\beta}$.
\end{tbox}
Unfortunately, this kind of reweighting yields a $(1+\alpha/\beta+\eps,1+\beta/\alpha)$-approximate algorithm.
We can drive the length (resp. cost) approximation down arbitrarily close to $1+\eps$, but this is at the expense of the cost (resp. length) approximation being driven up!

\subsection*{Breaking the Shared Sum Stalemate with Exploding Random Clocks}

It turns out that the weighted shared sum is not the only way to drive an approximation factor down.
There is in fact a much better way, which brings \emph{both} approximation factors down.
To see this, we first give a \emph{probabilistic interpretation} to the $(2+\eps,2)$-approximate algorithm.
Assume for simplicity $L = (1+\eps)^0 = 1$ and $C = 1$, which in effect means that there is no need to normalize $\ell$ and $c$ in the $(2+\eps,2)$-approximate algorithm; we just run Dijkstra's algorithm with a cutoff distance of $2$.
Now consider the following scenario:
\begin{itemize}
    \item You hold on to two independent exponential clocks called the length-clock and cost-clock.
        The length-clock and cost-clock explode at times $\lclock$ and $\cclock$ respectively, which are independent random variables with $\lclock, \cclock \sim \expod(1)$.
    \item You can walk backwards across an edge $uv$, from $v$ to $u$.
        Doing so will advance time underlying $\lclock$ by $\ell(uv)$ and time underlying $\cclock$ by $c(uv)$.
    \item You stand initially upon $t$.
        The goal is to get back home to $s$ safely (i.e. with not a single explosion).
\end{itemize}
By taking an $(s,t)$-path $P$ (recall, backwards from $t$ to $s$), the probability of a deadly trip home is thus $1 - \expo\left(-(\ell(P) + c(P))\right)$.
This is just the length-cost shared sum weight of $P$, passed through the function $1 - \expo(-x)$.
What, then, does $d_0(t)$ mean?
This is the $\left(1 - \expo(-x)\right)$-rescaled probability of extinction by selecting an optimal route from $t$ to $s$!
Thus, under this probabilistic lens, we accept all $t \in V$ which have extinction probability $\le 1 - e^{-2}$.

A conceptual reason why this algorithm yields $(2+\eps,2)$-approximate solutions is that taking the three distinct paths\footnote{We temporarily discharge the assumption that lengths and costs are non-zero for this example.} $P_{1,1}$ and $P_{2,0}$ and $P_{0,2}$, where $\ell(P_{x,y}) = x$ and $c(P_{x,y}) = y$, taking each of these paths yields the exact same survival probability of $e^{-2}$.
If $P_{1,1}$ is indeed exactly optimal, $P_{2,0}$ and $P_{0,2}$ are $(2,2)$-approximate.
It stands to reason that increasing the survival probability of taking $P_{1,1}$ and decreasing the survival probability of taking $P_{2,0}$ and/or $P_{0,2}$ would result in a tighter approximation.
To achieve this, we concentrate the clocks further on their mean of $1$.

The time-to-explode for a clock can be modeled as a single tick of the clock.
We can increase concentration by modeling the time-to-explode with a two-tick countdown, with both the first and second ticks being independently distributed according to $\expod(2)$ where the second tick's timer starts after the first tick.
Here, the mean time for explosion (that is, the second tick) remains $1$.
Importantly, note how the survival probabilities of the three paths change.
A calculation shows the survival probability of taking $P_{1,1}$ to be $9e^{-4}$ and $P_{2,0}$ and $P_{0,2}$ to be $5e^{-4}$.
Changing the extinction probability cutoff to $\le 1 - 9e^{-4}$ (instead of the original cutoff of $\le 1 - e^{-2}$) keeps $P_{1,1}$ but rules out $P_{2,0}$ and $P_{0,2}$ as escape routes.

By further concentrating times-to-explode to within a $\Theta(\eps)$ width window around their mean of $1$ (we will need an $\Omega(\eps^{-2})$-tick countdown to achieve this), we arrive at a $(1+\eps,1+\eps)$-approximate algorithm.
Having $k$ independent clocks gets the version for $\rsp$ with $k$ resources (see \Cref{sec:k-resources}).

\section{The Algorithm}
\label{sec:algorithm}

In this section we describe the algorithm which attains \Cref{thm:main}.
Analysis of this algorithm is deferred to \Cref{sec:analysis}.
We follow the same outer framework of the $(2+\eps,2 )$-approximate algorithm described in \Cref{sec:overview}, where $L$ is guessed to be some power of $(1+\eps)$.
Henceforth, we fix $L$ in this section.
The algorithm we formalize in this section decides a set of vertices satisfying the following conditions (which will be proved in \Cref{sec:analysis}):
\begin{align*}
    &\textrm{(Completeness) If there is an } (s,t)\textrm{-path } P \textrm{ with } \ell(P) \le L \textrm{ and } c(P) \le C \textrm{ then } t \textrm{ is in the set.}
    \\
    &\textrm{(Soundness)  If } t \textrm{ is in the set then there is an } (s,t)\textrm{-path } P \textrm{ with } \ell(P) \le (1+\eps)L \textrm{ and } c(P) \le (1+\eps)C.
\end{align*}
The algorithm also outputs a table from which we can retrieve such witness paths, not unlike how a shortest path algorithm outputs a shortest path tree.
By taking the smallest value of $L$ for which $t$ is accepted, we can retrieve a $((1+\eps)^2, 1+\eps)$-approximate $(s,t)$-path from the table constructed for that $L$.
By initially rescaling $\eps$ down by a sufficiently large constant factor, this is a $(1+\eps', 1+\eps')$-approximate $(s,t)$-path.
We now describe the inner framework, which generalizes the Dijkstra execution of the $(2+\eps,2)$-approximate algorithm to multiple ticks.

\paragraph{Parameters.}
We model the length-clock and cost-clock to explode on the $q$th tick, where $q = \Theta(\eps^{-2})$.
Let $\llambda = \frac{q}{(1+\eps/2)L}$ and $\clambda = \frac{q}{(1+\eps/2)C}$.
Finally, for each edge $uv \in E$, define the independent random variables $L_{uv} \sim \pois(\llambda \ell(uv))$ and $C_{uv} \sim \pois(\clambda c(uv))$; these should not be confused with the fixed values $L$ and $C$, which are not random.

\paragraph{Inner Algorithm.}
We build a DP table with $O(nq^2)$ states of the form $(t,i,j)$ where $t \in V$ and $i,j \in [q]_0$.
The state $(t,i,j)$ is meant to store the minimum extinction probability starting at vertex $t$ when the length-clock has already ticked exactly $i$ times and the cost-clock has already ticked exactly $j$ times.
We initialize the boundary states: $\dpt(v,i,q) \gets 1$ for all $v \in V$ and all $i \in [q]_0$; $\dpt(v,q,j) \gets 1$ for all $v \in V$ and all $j \in [q]_0$; and $\dpt(s,i,j) \gets 0$ for all $i < q$ and $j < q$.
The DP table also supplies pointers from states to the vertex index (i.e. first index) of parent states.
This pointer is viewed as an instruction for which vertex to go to next -- in totality, an escape plan --, which is the one that minimizes extinction probability.
The inner algorithm is then formalized thusly.

\begin{tbox}
    \algname{$\rsp$ Inner Algorithm (Escape Plan Construction)}\\
    \textbf{Input:} An $n$-vertex $m$-edge directed graph $G=(V,E)$ with edge lengths $\ell(\cdot)$ and edge costs $c(\cdot)$, a source $s \in V$, a length threshold $L$, a cost threshold $C$.\\
    \textbf{Output:} A set of vertices, and a DP table.
    \begin{enumerate}
        \item Initialize DP table boundary values.
        \item For $i$ ranging down from $q-1$ to $0$ and $j$ ranging down from $q-1$ to $0$:
            \begin{enumerate}
                \item Initialize a min-priority queue $\cQ$.
                \item Add $s$ to $\cQ$ with key $0$.
                \item Add $v$ to $\cQ$ with key $\min(\dpt(v,i+1,j), \dpt(v,i,j+1))$ for all $v \in V \setminus \set{s}$.
                    \\$\dpt(v,i,j)$ points to $v$ with an additional label \texttt{length} (if argmin is $\dpt(v,i+1,j)$) or \texttt{cost} (otherwise).
                \item While $\cQ$ is not empty:
                    \begin{enumerate}
                        \item Let $u$ be the popped element from $\cQ$ and set $\dpt(u,i,j)$ to be its key.
                        \item For all edges $uv$ ($v$ still in $\cQ$), update the key of $v$ to
                        \begin{align*}
                            &\prob{i + L_{uv} \ge q \textrm{ or } j + C_{uv} \ge q}
                            \\+
                            &\sum_{\substack{k_\ell \in [q-i-1]_0\\k_c \in [q-j-1]_0}} \prob{L_{uv} = k_\ell \textrm{ and } C_{uv} = k_c} \dpt(u,i+k_\ell,j+k_c)
                            \tag{\ding{67}}
                        \end{align*}
                            if strictly smaller than the current key.
                            \\$\dpt(v,i,j)$ points to $u$ if key is updated.
                    \end{enumerate}
            \end{enumerate}
        \item Accept all $t$ with $\dpt(t,0,0) \le 1/4$.
    \end{enumerate}
\end{tbox}
The first term in (\ding{67}), the key update expression, is the probability of explosion while walking (backwards) on $uv$.
The second term in (\ding{67}) calculates the extinction probabilities of disjoint events where $k_\ell$ and $k_c$ ticks elapse while crossing $uv$.

\section{Analysis}
\label{sec:analysis}

In this section we prove the correctness and time complexity of the $\rsp$ inner algorithm and thence establish \Cref{thm:main}.
First, note that
\begin{align*}
    \dpt(t,i,j) = &
    \ \prob{\begin{array}{l}\textrm{Length-clock or cost-clock explodes}\\\textrm{when following an optimal adaptive escape plan from } t\\\textrm{with respective countdowns } q-i \textrm{ and } q-j \end{array}}
    \\
    \le
    &
    \min_{(s,t)\textrm{-path } P} \prob{\begin{array}{l}\textrm{Length-clock or cost-clock explodes}\\\textrm{with respective countdowns } q-i \textrm{ and } q-j \\\textrm{when taking }P\end{array}},
\end{align*}
where the inequality holds since a fixed $P$ is a non-adaptive escape plan.
One can see the faithfulness by using a simple argument on the execution of the $\rsp$ inner algorithm in the same way one traces the execution of Dijkstra to prove its correctness (see \Cref{sec:dijkstra}).

\subsection*{Completeness}

We now show that there are no false-negative targets which can be reached by $s$.
\begin{lemma}
\label{lem:complete}
    Suppose there is an $(s,t)$-path $P$ with $\ell(P) \le L$ and $c(P) \le C$.
    Then $t$ is accepted by the $\rsp$ inner algorithm.
\end{lemma}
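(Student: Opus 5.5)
We start from the inequality stated just above: $\dpt(t,0,0)$ is at most the explosion probability when the fixed path $P$ is followed from $t$ back to $s$ with countdowns $q$ and $q$. Following $P$ backwards, the numbers of length-ticks and cost-ticks accumulated along $P$ are $\sum_{uv\in P} L_{uv}$ and $\sum_{uv\in P} C_{uv}$. By independence and additivity of Poisson variables, these are $X\sim\pois(\llambda \ell(P))$ and $Y\sim\pois(\clambda c(P))$, and $X$ and $Y$ are independent. An explosion occurs exactly when $X\ge q$ or $Y\ge q$; this is the event accounted for by the first term of (\ding{67}) along the way, while the second term propagates the surviving tick counts. It therefore suffices to show
\[
\prob{X \ge q} + \prob{Y \ge q} \le 1/4,
\]
using a union bound. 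We will have proved the lemma once each term is shown to be at most $1/8$.

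By monotonicity of the Poisson tail in its mean, $\ell(P)\le L$ gives $\llambda\ell(P)\le q/(1+\eps/2)$. Hence we may replace $X$ by $X'\sim\pois(\mu)$ with $\mu = q/(1+\eps/2)$, and similarly for $Y$. The event $X'\ge q$ means $X'\ge (1+\eps/2)\mu$. A Chernoff bound for Poisson variables gives
\[
\prob{X'\ge (1+\delta)\mu}\le \expo(-\delta^2\mu/3)
\]
for $\delta=\eps/2<1$. Alternatively, Chebyshev suffices: the variance is $\mu$, so the probability is at most $\mu/(\delta\mu)^2 = 4/(\eps^2\mu)\le 8/(\eps^2 q)$.

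The main step is to choose the constant in $q=\Theta(\eps^{-2})$ so that this is at most $1/8$. With Chebyshev, $q\ge 64/\eps^2$ works. With Chernoff, we need $\eps^2 q/(12(1+\eps/2))\ge \ln 8$, so $q \ge 40/\eps^2$ suffices. Whichever constant is fixed here must also be compatible with the soundness argument later, but for completeness any sufficiently large constant works.

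The one subtlety to check is that the DP inequality applies even though $\dpt$ is computed by a Dijkstra-like procedure. This is the faithfulness claim (deferred to \Cref{sec:dijkstra}), which we assume: $\dpt(t,0,0)$ equals the optimal adaptive escape probability, and this is at most the probability for the non-adaptive plan $P$. Combining the two bounds gives $\dpt(t,0,0)\le 1/4$, so $t$ is accepted.
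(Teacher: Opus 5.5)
Your proposal is correct and follows the paper's proof essentially step for step: additivity of the Poisson variables along $P$, domination by $\pois(\llambda L)$ and $\pois(\clambda C)$, a tail bound of $1/8$ for each clock with $q=\Theta(\eps^{-2})$, a union bound, and the inequality that $\dpt(t,0,0)$ is at most the extinction probability of the non-adaptive plan $P$. The only difference is that you make the paper's ``standard bounds'' explicit (Chebyshev with $q\ge 64/\eps^2$, or Chernoff with $q\ge 40/\eps^2$), and both calculations are correct.
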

\begin{proof}
    Consider walking backwards along such a path $P$, exposing the randomness of $L_{uv}$ and $C_{uv}$ as we step across $uv \in P$.
    We have
    \begin{align*}
        &\sum_{uv \in P} L_{uv} \sim \pois(\llambda \ell(P)),
        \\
        &\sum_{uv \in P} C_{uv} \sim \pois(\clambda c(P)),
    \end{align*}
    and so, by standard bounds (with $q = \Theta(\eps^{-2})$) and defining $\lclock \sim \pois(\llambda L)$ and $\cclock \sim \pois(\clambda C)$,
    \begin{align*}
        &\prob{\sum_{uv \in P} L_{uv} \ge q} \le \prob{\lclock \ge q} \le 1/8
        \\
        &\prob{\sum_{uv \in P} C_{uv} \ge q} \le \prob{\cclock \ge q} \le 1/8.
    \end{align*}
    The probability of extinction walking along $P$ is therefore at most $1/4$ by a union bound, and $\dpt(t,0,0)$ is no larger.
\end{proof}

\subsection*{Soundness}

Before proving soundness, consider the following random process to produce an $(s,t)$-path.
\begin{tbox}
    \algname{Random Path Retrieval (Following the Escape Plan)}\\
    \textbf{Input:} Escape plan DP table, $t$.\\
    \textbf{Output:} An $(s,t)$-path $P$ (or explosion \explode).
    \begin{enumerate}
        \item $P \gets t$.
        \item $\lclockt \gets 0$ and $\cclockt \gets 0$.
            These track ticks so far.
        \item While $v \neq s$ where $v$ is the first vertex of $P$:
            \begin{enumerate}
                \item $u \gets $ parent of $\dpt(v,\lclockt,\cclockt)$.
                \item If $u = v$:
                    \begin{enumerate}
                        \item If $\dpt(v,\lclockt,\cclockt)$'s label is \texttt{length}: $\lclockt \gets \lclockt + 1$.
                        \item If $\dpt(v,\lclockt,\cclockt)$'s label is \texttt{cost}: $\cclockt \gets \cclockt + 1$.
                    \end{enumerate}
                \item Else:
                    \begin{enumerate}
                        \item $P \gets u \cdot P$.
                        \item $\lclockt \gets \lclockt + L_{uv}$ using fresh randomness.
                        \item $\cclockt \gets \cclockt + C_{uv}$ using fresh randomness.
                    \end{enumerate}
                \item If $\lclockt \ge q$ or $\cclockt \ge q$: Return \explode.
            \end{enumerate}
        \item Return $P$.
    \end{enumerate}    
\end{tbox}

We now show there are no false-positive targets which cannot be reached by $s$.
\begin{lemma}
\label{lem:sound}
    Suppose $\dpt(t,0,0) \le 1/4$.
    Then there is an $(s,t)$-path $P$ with $\ell(P) \le (1+\eps)L$ and $c(P) \le (1+\eps)C$.
\end{lemma}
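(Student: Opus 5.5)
We prove the lemma by running the Random Path Retrieval procedure from $t$ with the DP table produced by the inner algorithm. The first step is to show that the procedure returns \explode{} with probability exactly $\dpt(t,0,0)$. This is the faithfulness statement noted at the start of \Cref{sec:analysis} (cf.\ \Cref{sec:dijkstra}). Each state's pointer realizes the minimum in its key, so by induction on states the explosion probability of following the plan from $(v,i,j)$ equals $\dpt(v,i,j)$. The induction runs in decreasing $(i,j)$, and within a fixed $(i,j)$ in Dijkstra pop order. Self-pointers labelled \texttt{length}/\texttt{cost} move to $(v,i+1,j)$ or $(v,i,j+1)$. Edge pointers average over $(L_{uv},C_{uv})$ exactly as in (\ding{67}).

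Termination needs a little care, because edges may produce $0$ ticks. Since lengths and costs are positive, $\dpt(s,i,j)=0$ for $i,j<q$, and keys are monotone along the pop order. Hence a pointer from $(v,i,j)$ to $u$ always refers to a $u$ popped earlier in the same $(i,j)$ round. So a walk staying at fixed $(i,j)$ cannot cycle, and the procedure terminates almost surely. Consequently, with probability at least $3/4$ the procedure returns a genuine $(s,t)$-path $P$ with final counters $\lclockt<q$ and $\cclockt<q$.

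The second step is to show that, with probability at least $3/4$ over the procedure's randomness, the returned path is not \emph{bad}. Call $P$ bad if $\ell(P)>(1+\eps)L$ or $c(P)>(1+\eps)C$. If a non-bad path has positive probability of being returned, it certifies the lemma. So it suffices to show
\[
\Pr[\text{a bad path is returned}] < 3/4,
\]
and I would aim for $\le 1/4$. Bad paths must be handled despite the adaptivity of the walk: the choice of the next edge depends on the clock values, which depend on earlier samples. The idea is to couple the samples with Poisson processes.

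Fix one rate-$\llambda$ Poisson process $N_\ell$ and one rate-$\clambda$ process $N_c$ on $[0,\infty)$. When crossing edge $uv$, take $L_{uv}$ to be the increment of $N_\ell$ over the next $\ell(uv)$ units of "length time", and similarly for costs. Independent increments preserve the procedure's distribution. In this coupling, $\lclockt$ at any moment is at least $N_\ell(\ell(\text{prefix walked}))$. The self-loop ticks only add to the counter, so this is an inequality rather than an equality.

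Now suppose the returned path has $\ell(P)>(1+\eps)L$. Since we did not explode, $N_\ell((1+\eps)L)\le \lclockt <q$. But $N_\ell((1+\eps)L)\sim\pois\bigl(q(1+\eps)/(1+\eps/2)\bigr)$, whose mean exceeds $q(1+\eps/4)$. By a Chernoff bound with $q=\Theta(\eps^{-2})$ and a large enough constant, this event has probability at most $1/8$. The same argument bounds the cost case by $1/8$. By a union bound, a bad path is returned with probability at most $1/4<3/4$, which proves the lemma.

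The main obstacle is this adaptivity issue. The coupling resolves it, because the event that both $N_\ell((1+\eps)L)<q$ and $N_c((1+\eps)C)<q$ is defined independently of the path chosen. The remaining checks are the termination argument for zero-tick edges and the Chernoff constants.
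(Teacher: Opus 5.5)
Your proposal is correct and follows essentially the same route as the paper. Both run the random retrieval, bound the explosion probability by $\dpt(t,0,0)\le 1/4$ via DP faithfulness, and couple the walked length and cost with rate-$\llambda$ and rate-$\clambda$ Poisson processes so that a bad path forces $\pois(\llambda(1+\eps)L)<q$ or $\pois(\clambda(1+\eps)C)<q$, followed by a union bound. Your explicit handling of adaptivity and termination fills in details the paper leaves implicit. One correction: a pointer from $(v,i,j)$ goes to a vertex popped earlier in the same round simply because the update rule only touches vertices still in $\cQ$, not because of positive weights or monotone keys.
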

\begin{proof}
    Consider the random path retrieval process from $t$.
    We show that with positive probability, an $(s,t)$-path $P$ with $\ell(P) \le (1+\eps)L$ and $c(P) \le (1+\eps)C$ is returned.
    Define $\lclock \sim \pois(\llambda (1+\eps)L)$ and $\cclock \sim \pois(\clambda (1+\eps)C)$.
    The probability that this does not happen is
    \begin{align*}
        \prob{\begin{array}{l}\textrm{\explode returned}\\\textrm{or }P\textrm{ returned with }\ell(P)>(1+\eps)L \\\textrm{or }P\textrm{ returned with }c(P)>(1+\eps)C\end{array}}
        &\le \begin{array}{l}\phantom{+}\prob{\textrm{\explode returned}} \\+\prob{P\textrm{ returned with }\ell(P)>(1+\eps)L} \\+\prob{P\textrm{ returned with }c(P)>(1+\eps)C}\end{array}
        \tag{Union bound}
        \\
        &\le \begin{array}{l}\phantom{+}1/4 \\+\prob{P\textrm{ returned with }\ell(P)>(1+\eps)L} \\+\prob{P\textrm{ returned with }c(P)>(1+\eps)C}\end{array}
        \tag{$\dpt(t,0,0) \le 1/4$}
        \\
        &\le 1/4 + \prob{\lclock < q} + \prob{\cclock < q}
        \tag{Stochastic dominance}
        \\
        & < 1/2,
    \end{align*}
    where the last inequality follows from standard bounds on the Poisson distribution (with $q = \Theta(\eps^{-2})$), and the stochastic dominance claim holds by coupling the $(1+\eps)L$ length (resp. $(1+\eps)C$ cost) suffix of the random output path $P$ with a Poisson point process with rate $\llambda$ (resp. $\clambda$) over an interval of size $(1+\eps)L$ (resp. $(1+\eps)C$).
    Consequently, a $(1+\eps,1+\eps)$-approximate path $P$ must exist.
\end{proof}
Worth noting in the above proof is that one can deduce that we need only try the random path retrieval process $\Theta(\log n)$ times to successfully extract a $(1+\eps,1+\eps)$-approximate path with high probability.

\subsection*{Putting Everything Together}
Finally, we can give the proof of our main result.
\begin{proof}[Proof of \Cref{thm:main}]\ \\
    \underline{Correctness}:
    By scaling $\eps \gets \eps/4$, we have $(1+\eps_{\textrm{new}})^2 \le 1+\eps_{\textrm{old}}$ so it suffices to show a $((1+\eps)^2,1+\eps)$-approximate algorithm.
    \Cref{lem:complete} and \Cref{lem:sound} together, for an arbitrarily precise value of $L$, yield a $(1+\eps,1+\eps)$-approximate algorithm.
    Since $L$ has a resolution of $(1+\eps)$, the combined framework of guessing and running the $\rsp$ inner algorithm gives a $((1+\eps)^2,1+\eps)$-approximate algorithm.
    \\
    \underline{Time Complexity}:
    The outer loop of the $\rsp$ inner algorithm iterates $q^2 = O(\eps^{-4})$ times.
    Each iteration runs Dijkstra's algorithm, but the candidate keys take $q^2$ time to compute since that is the number of summands.
    The inner algorithm thus takes $\Ot(m/\eps^{8})$ time.
    This is executed $O(\log(nW)/\eps)$ times by the outer framework, completing the proof.
\end{proof}

\section{Derandomization of Path Retrieval}
\label{sec:derandomization}

In this section we show how witness paths can be deterministically retrieved from the final $O_\eps(\log nW)$ DP tables, each of size $\Ot_\eps(m)$.
Our approach is to use the method of conditional expectations on the random path retrieval procedure, but caution must be taken.
The randomness is clearly exposed by the $L_{uv}$'s and $C_{uv}$'s, but what score are we taking the expectation of?
Naively, we could try to use $\dpt(u,i,j)$ -- the extinction probability of the optimal escape plan -- as a score which we try to minimize.
But this incentivizes the derandomization to always choose zero ticks for the length-clock and cost-clock!
We need to take the history of the walk so far into account in order to correct the unlikely situation where lots of length or cost has been spent with a disproportionately low number of clock ticks.
In view of this, we define the potential function $\Phi$.

\paragraph{Potential. }
Let $z = \frac{1+\eps/2}{1+\eps}$.
\begin{align*}
    \Phi(u,i,j,P) = \dpt(u,i,j) + \frac{z^i}{8\prod\limits_{xy \in P}\expect{z^{L_{xy}}}} + \frac{z^j}{8\prod\limits_{xy \in P}\expect{z^{C_{xy}}}}.
\end{align*}
We will use the method of conditional expectations to minimize $\Phi$.

\paragraph{Path Retrieval.}
We are now ready to state the procedure for deterministically retrieving witness paths from the DP table.

\begin{tbox}
    \algname{Deterministic Path Retrieval  (Following the Escape Plan)}\\
    \textbf{Input:} Escape plan DP table, $t$.\\
    \textbf{Output:} An $(s,t)$-path $P$.
    \begin{enumerate}
        \item $P \gets t$.
        \item $\lclockt \gets 0$ and $\cclockt \gets 0$.
            These track ticks so far.
        \item $\lH \gets 1/8$ and $\cH \gets 1/8$.
            \\These track the second and third summands of $\Phi(v,\lclockt,\cclockt,P)$ where $v$ is the first vertex of $P$.
        \item While $v \neq s$ where $v$ is the first vertex of $P$:
            \begin{enumerate}
                \item $u \gets $ parent of $\dpt(v,\lclockt,\cclockt)$.
                \item If $u = v$:
                    \begin{enumerate}
                        \item If $\dpt(v,\lclockt,\cclockt)$'s label is \texttt{length}: $\lclockt \gets \lclockt + 1$ and $\lH \gets z\lH$.
                        \item If $\dpt(v,\lclockt,\cclockt)$'s label is \texttt{cost}: $\cclockt \gets \cclockt + 1$ and $\cH \gets z\cH$.
                    \end{enumerate}
                \item Else:
                    \begin{enumerate}
                        \item $P \gets u \cdot P$.
                        \item $I \gets [q-\lclockt-1]_0\times[q-\cclockt-1]_0$.
                        \item \begin{align*}
                                (i_\ell, i_c) \gets \operatorname*{arg\,min}_{(i_\ell, i_c) \in I}\left(\begin{array}{l}\phantom{+}\dpt(u,\lclockt + i_\ell,\cclockt + i_c) \\+ z^{i_\ell}\lH / \expect{z^{L_{uv}}} \\+ z^{i_c}\cH / \expect{z^{C_{uv}}} \end{array}\right).
                            \end{align*}
                        \item $\lclockt \gets \lclockt + i_\ell$.
                        \item $\cclockt \gets \cclockt + i_c$.
                        \item $\lH \gets z^{i_\ell}\lH / \expect{z^{L_{uv}}}$.
                        \item $\cH \gets z^{i_c}\cH / \expect{z^{C_{uv}}}$.
                    \end{enumerate}
            \end{enumerate}
        \item Return $P$.
    \end{enumerate}    
\end{tbox}

Before we bound the length and cost of the output path $P$, we first show that a violation gives a signal in $\Phi$.

\begin{observation}
\label[observation]{obs:violation}
    Let $P$ be a path output by the deterministic path retrieval procedure.
    \begin{itemize}
        \item $\ell(P) > (1+\eps)L \textrm{ and } \lclockt < q \implies \Phi \ge 1$,
        \item $c(P) > (1+\eps)C \textrm{ and } \cclockt < q \implies \Phi \ge 1$.
    \end{itemize}
\end{observation}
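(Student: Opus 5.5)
The plan is to look only at the second summand of $\Phi$ at termination, the length term. The first summand $\dpt(\cdot)$ is nonnegative and so is the cost term, so it is enough to show that the length term alone is at least $1$ when $\ell(P) > (1+\eps)L$ and $\lclockt < q$. The cost case is symmetric, with $\clambda$, $c$, $C$ in place of $\llambda$, $\ell$, $L$.

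First, I will write the length term at termination in closed form. $\lH$ starts at $1/8$. Each self-step (label \texttt{length}) multiplies it by $z$. Each edge step $uv$ multiplies it by $z^{i_\ell}/\expect{z^{L_{uv}}}$. The exponent of $z$ therefore accumulates exactly the total number of length ticks, which is the final $\lclockt$. Hence at the end
\[
\lH = \frac{z^{\lclockt}}{8\prod_{xy\in P}\expect{z^{L_{xy}}}},
\]
which is the second summand of $\Phi$ as defined.

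Next, I will evaluate the denominator with the Poisson probability generating function. Since $L_{xy}\sim\pois(\llambda\ell(xy))$, we have $\expect{z^{L_{xy}}} = \expo(-\llambda\ell(xy)(1-z))$. Taking the product over the path gives $\expo(-\llambda\ell(P)(1-z))$. So
\[
\lH = \tfrac18\, z^{\lclockt}\expo\bigl(\llambda\ell(P)(1-z)\bigr).
\]
Because $z<1$ and $\lclockt<q$, in fact $\lclockt \le q-1$, we get $z^{\lclockt}\ge z^{q}$. Using $\ell(P)>(1+\eps)L$ and $\llambda = q/((1+\eps/2)L)$, the exponent exceeds $q(1-z)(1+\eps)/(1+\eps/2) = q(1-z)/z$. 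Therefore
\[
\lH > \tfrac18\bigl(z\,e^{(1-z)/z}\bigr)^{q}.
\]

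The main computational step is showing $\bigl(z e^{(1-z)/z}\bigr)^q \ge 8$ for $q=\Theta(\eps^{-2})$ with a large enough constant. Write $w = 1/z = (1+\eps)/(1+\eps/2) > 1$. The base is $e^{w-1}/w = \expo(w-1-\ln w)$. Since $w-1 = \Theta(\eps)$, we have $w-1-\ln w \ge (w-1)^2/2 - (w-1)^3/3 = \Theta(\eps^2)$, using $\eps<1$. Hence $q(w-1-\ln w)\ge \ln 8$ once the hidden constant in $q$ is large enough. This gives $\lH\ge 1$, so $\Phi\ge 1$.

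The obstacle I expect is bookkeeping rather than analysis. One must check that the closed form of $\lH$ really matches the potential's definition, including the self-loop steps, which change $\lclockt$ but not $P$. One must also make sure the constant in $q$ is chosen compatibly with the Poisson tail bounds already used in \Cref{lem:complete} and \Cref{lem:sound}.
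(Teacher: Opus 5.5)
Your proposal is correct and follows the paper's proof essentially step for step. You isolate the length summand of $\Phi$, evaluate the product with the Poisson PGF, and use $\ell(P) > (1+\eps)L$ together with $\llambda = q/((1+\eps/2)L)$ to reach $z^{q}e^{q(1-z)/z}$; the paper keeps $z^{q-1}$, a harmless difference. Your explicit bound $w-1-\ln w = \Theta(\eps^2)$ simply fills in the paper's appeal to ``standard bounds''.
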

\begin{proof}
    We prove the first item (the second is analogous).
    It suffices to show $\frac{z^{\lclockt}}{\prod\limits_{uv \in P}\expect{z^{L_{uv}}}} \ge 8$, where the left hand side comes from the second summand of $\Phi$.
    \begin{align*}
        \frac{z^{\lclockt}}{\prod\limits_{uv \in P}\expect{z^{L_{uv}}}} 
        \ge\ &
        \frac{z^{q-1}}{\prod\limits_{uv \in P}\expect{z^{L_{uv}}}}
        \tag{$z < 1$}
        \\
        =\ & 
        z^{q-1}e^{(1-z)\llambda \ell(P)}
        \tag{PGF of Poisson distribution}
        \\
        >\ &
        z^{q-1}e^{(1-z)\llambda (1+\eps)L}
        \tag{$\ell(P) > (1+\eps)L$}
        \\
        =\ & 
        z^{q-1}e^{(1-z)q/z} \ge 8.
        \tag{$\llambda = \frac{q}{(1+\eps/2)L}$}
    \end{align*}
    The last inequality follows from standard bounds (with $q = \Theta(\eps^{-2})$ and $z = (1+\eps/2)/(1+\eps)$).
\end{proof}

We are finally ready to show length and cost bounds on the witness path extracted from the deterministic path retrieval procedure.

\begin{lemma}
    Suppose $\dpt(t,0,0) \le 1/4$.
    Then the deterministic path retrieval procedure produces an $(s,t)$-path $P$ with $\ell(P) \le (1+\eps)L$ and $c(P) \le (1+\eps)C$.
\end{lemma}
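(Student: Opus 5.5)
The plan is to run the method of conditional expectations with $\Phi$ as the pessimistic estimator. We show that $\Phi$ never increases along the deterministic retrieval, that it starts below $1$, and that at termination a violation would force $\Phi \ge 1$ by \Cref{obs:violation}.

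\emph{Initial value.} At the start we have $P = t$ (no edges), $\lclockt=\cclockt=0$ and $\lH=\cH=1/8$. Hence $\Phi(t,0,0,t) = \dpt(t,0,0) + 1/8 + 1/8 \le 1/2 < 1$. Throughout, we maintain the invariant that $\lH$ and $\cH$ equal the second and third summands of $\Phi(v,\lclockt,\cclockt,P)$. This is immediate from the update rules, since multiplying by $z$ per tick and dividing by $\expect{z^{L_{uv}}}$ per edge exactly tracks $z^i/(8\prod\expect{z^{L_{xy}}})$.

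\emph{Monotonicity.} Consider the two kinds of steps.
\begin{itemize}
\item For a \texttt{length}-labelled self-pointer at state $(v,i,j)$, the DP sets $\dpt(v,i,j) = \dpt(v,i+1,j)$, while the second summand gets multiplied by $z<1$. So $\Phi$ does not increase; the \texttt{cost} case is symmetric.
\item For an edge step from $v$ to its parent $u$, the key fact is that $\dpt(v,i,j)$ equals (\ding{67}). This is the value the Dijkstra-like loop assigned when it set the pointer, by the faithfulness argument of \Cref{sec:dijkstra}.
\end{itemize}
For the edge step, we compute the expectation of the new potential over $L_{uv}, C_{uv}$, with exploding outcomes contributing at most their probability. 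Define the random potential to be $\dpt(u,i+L_{uv},j+C_{uv})$ (interpreting $\dpt=1$ on the boundary) plus $z^{i+L_{uv}}\lH'/\ldots$, that is, the two $H$ terms after the update. Its expectation is at most (\ding{67}) $+\ \lH\,\expect{z^{L_{uv}}}/\expect{z^{L_{uv}}} + \cH = \Phi$ before the step. The product structure makes the $H$ terms martingales exactly. Since the two $H$ terms are nonnegative and we only drop them on exploding outcomes, the expectation restricted to non-exploding outcomes $(i_\ell,i_c)\in I$, plus the explosion probability, is at most $\Phi$.

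\emph{Main obstacle.} The argmin ranges only over non-exploding outcomes, so we must ensure some outcome in $I$ achieves value at most $\Phi$. The plan is a contradiction: suppose every $(i_\ell,i_c)\in I$ gives value $>\Phi_{\mathrm{old}}$. Exploding outcomes contribute $\ge 1$ each, from the boundary $\dpt$, and we have $\Phi_{\mathrm{old}}<1$. Then the full expectation would exceed $\Phi_{\mathrm{old}}$, because every outcome, exploding or not, has value $>\Phi_{\mathrm{old}}$. This contradicts the bound above. Hence the argmin value is at most $\Phi_{\mathrm{old}}<1$, which by induction keeps $\Phi<1$ and keeps $\lclockt,\cclockt<q$ throughout (a state with a clock at $q$ has $\dpt=1$, so $\Phi\ge1$).

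\emph{Termination and conclusion.} Termination follows because the DP's pointers from $(v,i,j)$ either increase a clock or move to a vertex popped earlier in the same $(i,j)$ phase, so the walk reaches $s$. At the end, $\Phi<1$ with both clocks $<q$. \Cref{obs:violation} then rules out $\ell(P)>(1+\eps)L$ and $c(P)>(1+\eps)C$.
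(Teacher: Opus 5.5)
Your proposal is correct and follows essentially the same route as the paper. Both arguments bound the initial potential by $\dpt(t,0,0)+1/8+1/8<1$, treat the self-pointer step through the DP update rule together with $z<1$, and treat the edge step by noting that the expectation of the new potential equals the old one (the DP identity plus the martingale property of the $z$-weighted terms). Both then use the fact that out-of-range outcomes have potential at least $1$ to place a good outcome in $I$, and finish with the same termination argument and \Cref{obs:violation}. The only difference is that you keep $\Phi$ non-increasing, whereas the paper only keeps $\Phi<1$.
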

\begin{proof}\ \\
    \underline{Bounding the potential}:
    The initial potential at the start of the retrieval procedure is $\Phi(t,0,0,t) = \dpt(t,0,0) + \lH + \cH \le 1/4 + 1/8 + 1/8 < 1$.
    
    We will show that the potential stays below $1$.
    For this purpose, extend the DP table infinitely beyond its boundary, with values $1$ at these new cells.
    Assume
    \begin{align*}
        \Phi(v,\lclockt,\cclockt,P) = \dpt(v,\lclockt,\cclockt) + \frac{z^{\lclockt}}{8\prod\limits_{xy \in P}\expect{z^{L_{xy}}}} + \frac{z^{\cclockt}}{8\prod\limits_{xy \in P}\expect{z^{C_{xy}}}} < 1,
    \end{align*}
    and let $u$ be the parent of $v$.
    We do a casework on $\set{v = u}, \set{v \neq u}$ (the second case is more interesting).
    \begin{itemize}
        \item If $v = u$, then suppose without loss of generality that the label is \texttt{length}.
            Then,
            \begin{align*}
                &\Phi(u, \lclockt + 1, \cclockt, P)
                \\
                =\ &
                \dpt(u,\lclockt + 1,\cclockt) + z \frac{z^{\lclockt}}{8\prod\limits_{xy \in P}\expect{z^{L_{xy}}}} + \frac{z^{\cclockt}}{8\prod\limits_{xy \in P}\expect{z^{C_{xy}}}}
                \\
                <\ &
                \dpt(u,\lclockt + 1,\cclockt) + \frac{z^{\lclockt}}{8\prod\limits_{xy \in P}\expect{z^{L_{xy}}}} + \frac{z^{\cclockt}}{8\prod\limits_{xy \in P}\expect{z^{C_{xy}}}}
                \tag{$z < 1$}
                \\
                =\ &
                \dpt(u,\lclockt,\cclockt) + \frac{z^{\lclockt}}{8\prod\limits_{xy \in P}\expect{z^{L_{xy}}}} + \frac{z^{\cclockt}}{8\prod\limits_{xy \in P}\expect{z^{C_{xy}}}}
                \tag{Definition of DP update rule}
                \\
                =\ & \Phi(u,\lclockt,\cclockt,P) < 1,
                \tag{By assumption and $u = v$}
            \end{align*}
            as claimed.
        \item If, on the other hand, $v \neq u$, then
            \begin{align*}
                &\expect{\Phi(u,\lclockt + L_{uv},\cclockt + C_{uv},u \cdot P)}
                \\
                =\ &
                \expect{\dpt(u,\lclockt + L_{uv},\cclockt+ C_{uv})}
                + \expect{\frac{z^{\lclockt + L_{uv}}}{8\prod\limits_{xy \in u\cdot P}\expect{z^{L_{xy}}}}}
                + \expect{\frac{z^{\cclockt + C_{uv}}}{8\prod\limits_{xy \in u\cdot P}\expect{z^{C_{xy}}}}}
                \\
                =\ &
                \expect{\dpt(u,\lclockt + L_{uv},\cclockt+ C_{uv})}
                + \frac{z^{\lclockt}}{8\prod\limits_{xy \in P}\expect{z^{L_{xy}}}} 
                + \frac{z^{\cclockt}}{8\prod\limits_{xy \in P}\expect{z^{C_{xy}}}}
                \\
                =\ &
                \dpt(v,\lclockt,\cclockt)
                + \frac{z^{\lclockt}}{8\prod\limits_{xy \in P}\expect{z^{L_{xy}}}} 
                + \frac{z^{\cclockt}}{8\prod\limits_{xy \in P}\expect{z^{C_{xy}}}}
                \tag{Definition of DP update rule}
                \\
                =\ &
                \Phi(v,\lclockt,\cclockt,P) < 1.
                \tag{By assumption}
            \end{align*}
            There must therefore exist values $L_{uv} = i_\ell$ and $C_{uv} = i_c$ for which $\Phi(u,\lclockt + i_\ell,\cclockt + i_c ,u \cdot P) < 1$.
            The value of $(i_\ell, i_c)$ must necessarily be in $[q-\lclockt-1]_0\times[q-\cclockt-1]_0$ or else the first summand of the potential is $1$ (recall that we extended the DP table infinitely).
    \end{itemize}

    \underline{Establishing termination (some path $P$ is output)}:
    Finally, note that the procedure terminates and outputs a path $P$ since each step either simulates a clock tick (of which there can be no more than $2q$ total increments) or updates the front of the path to a vertex which was extracted earlier from the min-priority queue by the Dijkstra execution on the $\rsp$ inner algorithm.

    \underline{Putting things together}:
    Hence, by using \Cref{obs:violation} with $\Phi < 1$ to bound $\ell(P)$ and $c(P)$, the claim follows.
\end{proof}

\ 
\\\\\textbf{Acknowledgments. }
I thank Aaron Bernstein and Adam Karczmarz for the many illuminating discussions we had while working on \cite{ashvinkumar2025faster}.

\bibliographystyle{alpha}
\bibliography{references}

@inproceedings{ashvinkumar2025faster,
  author       = {Ashvinkumar, Vikrant and Bernstein, Aaron and Karczmarz, Adam},
  title        = {Faster approximation algorithms for restricted shortest paths in directed graphs},
  booktitle    = {Proceedings of the 2025 Annual {ACM-SIAM} Symposium on Discrete Algorithms ({SODA})},
  pages        = {5263--5277},
  publisher    = {SIAM},
  year         = {2025},
  doi          = {10.1137/1.9781611978322.180},
  url          = {https://arxiv.org/abs/2410.17179},
  note         = {\href{https://arxiv.org/abs/2410.17179}{\mbox{arXiv:2410.17179}}}
}

@inproceedings{Bernstein12,
  author       = {Bernstein, Aaron},
  title        = {Near linear time {$(1+\epsilon)$}-approximation for restricted shortest paths in undirected graphs},
  booktitle    = {Proceedings of the Twenty-Third Annual {ACM-SIAM} Symposium on Discrete Algorithms ({SODA})},
  pages        = {189--201},
  publisher    = {SIAM},
  year         = {2012},
  doi          = {10.1137/1.9781611973099.17},
  url          = {https://doi.org/10.1137/1.9781611973099.17}
}

@inproceedings{BodwinH23,
  author       = {Bodwin, Greg and Hoppenworth, Gary},
  title        = {Folklore sampling is optimal for exact hopsets: Confirming the {$\sqrt{n}$} barrier},
  booktitle    = {2023 {IEEE} 64th Annual Symposium on Foundations of Computer Science ({FOCS})},
  pages        = {701--720},
  publisher    = {IEEE},
  year         = {2023},
  doi          = {10.1109/FOCS57990.2023.00046},
  url          = {https://arxiv.org/abs/2304.02193},
  note         = {\href{https://arxiv.org/abs/2304.02193}{\mbox{arXiv:2304.02193}}}
}

@inproceedings{BernsteinNW22,
  author       = {Bernstein, Aaron and Nanongkai, Danupon and Wulff{-}Nilsen, Christian},
  title        = {Negative-weight single-source shortest paths in near-linear time},
  booktitle    = {2022 {IEEE} 63rd Annual Symposium on Foundations of Computer Science ({FOCS})},
  pages        = {600--611},
  publisher    = {IEEE},
  year         = {2022},
  doi          = {10.1109/FOCS54457.2022.00063},
  url          = {https://arxiv.org/abs/2203.03456},
  note         = {\href{https://arxiv.org/abs/2203.03456}{\mbox{arXiv:2203.03456}}}
}

@inproceedings{goel2001efficient,
  author       = {Goel, Ashish and Ramakrishnan, Kajamalai G. and Kataria, Deepak and Logothetis, Dimitris},
  title        = {Efficient computation of delay-sensitive routes from one source to all destinations},
  booktitle    = {Proceedings {IEEE} {INFOCOM} 2001: Twentieth Annual Joint Conference of the {IEEE} Computer and Communications Societies},
  volume       = {2},
  pages        = {854--858},
  publisher    = {IEEE},
  year         = {2001}
}

@article{lorenz2001simple,
  author       = {Lorenz, Dean H. and Raz, Danny},
  title        = {A simple efficient approximation scheme for the restricted shortest path problem},
  journal      = {Operations Research Letters},
  volume       = {28},
  number       = {5},
  pages        = {213--219},
  year         = {2001},
  doi          = {10.1016/S0167-6377(01)00069-4},
  url          = {https://doi.org/10.1016/S0167-6377(01)00069-4}
}

@article{lorenz1998qos,
  author       = {Lorenz, Dean H. and Orda, Ariel},
  title        = {{QoS} routing in networks with uncertain parameters},
  journal      = {{IEEE/ACM} Transactions on Networking},
  volume       = {6},
  number       = {6},
  pages        = {768--778},
  year         = {1998},
  doi          = {10.1109/90.748088},
  url          = {https://doi.org/10.1109/90.748088}
}

@article{younis2003constraint,
  author       = {Younis, Ossama and Fahmy, Sonia},
  title        = {Constraint-based routing in the {Internet}: Basic principles and recent research},
  journal      = {{IEEE} Communications Surveys \& Tutorials},
  volume       = {5},
  number       = {1},
  pages        = {2--13},
  year         = {2003}
}

@article{NaorST07,
  author       = {Naor, Joseph and Shachnai, Hadas and Tamir, Tami},
  title        = {Real-time scheduling with a budget},
  journal      = {Algorithmica},
  volume       = {47},
  number       = {3},
  pages        = {343--364},
  year         = {2007},
  doi          = {10.1007/s00453-006-0191-8},
  url          = {https://doi.org/10.1007/s00453-006-0191-8}
}

@article{BriskornCLLP10,
  author       = {Briskorn, Dirk and Choi, Byung-Cheon and Lee, Kangbok and Leung, Joseph Y.-T. and Pinedo, Michael L.},
  title        = {Complexity of single machine scheduling subject to nonnegative inventory constraints},
  journal      = {European Journal of Operational Research},
  volume       = {207},
  number       = {2},
  pages        = {605--619},
  year         = {2010},
  doi          = {10.1016/j.ejor.2010.05.036},
  url          = {https://doi.org/10.1016/j.ejor.2010.05.036}
}

@article{GarroppoGT10,
  author       = {Garroppo, Rosario Giuseppe and Giordano, Stefano and Tavanti, Luca},
  title        = {A survey on multi-constrained optimal path computation: Exact and approximate algorithms},
  journal      = {Computer Networks},
  volume       = {54},
  number       = {17},
  pages        = {3081--3107},
  year         = {2010},
  doi          = {10.1016/j.comnet.2010.05.017},
  url          = {https://doi.org/10.1016/j.comnet.2010.05.017}
}

@book{GareyJ79,
  author       = {Garey, Michael R. and Johnson, David S.},
  title        = {Computers and Intractability: A Guide to the Theory of {NP}-Completeness},
  publisher    = {W. H. Freeman},
  year         = {1979},
  isbn         = {0-7167-1044-7}
}

@article{joksch1966shortest,
  author       = {Joksch, Hans C.},
  title        = {The shortest route problem with constraints},
  journal      = {Journal of Mathematical Analysis and Applications},
  volume       = {14},
  number       = {2},
  pages        = {191--197},
  year         = {1966},
  doi          = {10.1016/0022-247X(66)90020-5},
  url          = {https://doi.org/10.1016/0022-247X(66)90020-5}
}

@article{Hassin92,
  author       = {Hassin, Refael},
  title        = {Approximation schemes for the restricted shortest path problem},
  journal      = {Mathematics of Operations Research},
  volume       = {17},
  number       = {1},
  pages        = {36--42},
  year         = {1992},
  doi          = {10.1287/moor.17.1.36},
  url          = {https://doi.org/10.1287/moor.17.1.36}
}

@article{Warburton87,
  author       = {Warburton, Arthur},
  title        = {Approximation of {Pareto} optima in multiple-objective, shortest-path problems},
  journal      = {Operations Research},
  volume       = {35},
  number       = {1},
  pages        = {70--79},
  year         = {1987},
  doi          = {10.1287/opre.35.1.70},
  url          = {https://doi.org/10.1287/opre.35.1.70}
}

@article{XueZTT08,
  author       = {Xue, Guoliang and Zhang, Weiyi and Tang, Jian and Thulasiraman, Krishnaiyan},
  title        = {Polynomial time approximation algorithms for multi-constrained {QoS} routing},
  journal      = {{IEEE/ACM} Transactions on Networking},
  volume       = {16},
  number       = {3},
  pages        = {656--669},
  year         = {2008},
  doi          = {10.1109/TNET.2007.900712},
  url          = {https://doi.org/10.1109/TNET.2007.900712}
}

@inproceedings{fineman2024single,
  author       = {Fineman, Jeremy T.},
  title        = {Single-source shortest paths with negative real weights in {$\widetilde{O}(mn^{8/9})$} time},
  booktitle    = {Proceedings of the 56th Annual {ACM} Symposium on Theory of Computing ({STOC})},
  pages        = {3--14},
  publisher    = {ACM},
  year         = {2024},
  url          = {https://arxiv.org/abs/2311.02520},
  note         = {\href{https://arxiv.org/abs/2311.02520}{\mbox{arXiv:2311.02520}}}
}

@inproceedings{duan2025breaking,
  author       = {Duan, Ran and Mao, Jiayi and Mao, Xiao and Shu, Xinkai and Yin, Longhui},
  title        = {Breaking the sorting barrier for directed single-source shortest paths},
  booktitle    = {Proceedings of the 57th Annual {ACM} Symposium on Theory of Computing ({STOC})},
  pages        = {36--44},
  publisher    = {ACM},
  year         = {2025},
  doi          = {10.1145/3717823.3718179},
  url          = {https://arxiv.org/abs/2504.17033},
  note         = {\href{https://arxiv.org/abs/2504.17033}{\mbox{arXiv:2504.17033}}}
}

@inproceedings{haeupler2024universal,
  author       = {Haeupler, Bernhard and Hlad{\'{\i}}k, Richard and Rozho{\v{n}}, V{\'a}clav and Tarjan, Robert E. and T{\v{e}}tek, Jakub},
  title        = {Universal optimality of {Dijkstra} via beyond-worst-case heaps},
  booktitle    = {2024 {IEEE} 65th Annual Symposium on Foundations of Computer Science ({FOCS})},
  pages        = {2099--2130},
  publisher    = {IEEE},
  year         = {2024},
  doi          = {10.1109/FOCS61266.2024.00125},
  url          = {https://arxiv.org/abs/2311.11793},
  note         = {\href{https://arxiv.org/abs/2311.11793}{\mbox{arXiv:2311.11793}}}
}

@inproceedings{bernstein2022closing,
  author       = {Bernstein, Aaron and Wein, Nicole},
  title        = {Closing the gap between directed hopsets and shortcut sets},
  booktitle    = {Proceedings of the 2023 Annual {ACM-SIAM} Symposium on Discrete Algorithms ({SODA})},
  pages        = {163--182},
  publisher    = {SIAM},
  year         = {2023},
  doi          = {10.1137/1.9781611977554.ch7},
  url          = {https://arxiv.org/abs/2207.04507},
  note         = {\href{https://arxiv.org/abs/2207.04507}{\mbox{arXiv:2207.04507}}}
}

@book{mitzenmacher2017probability,
  author       = {Mitzenmacher, Michael and Upfal, Eli},
  title        = {Probability and Computing: Randomization and Probabilistic Techniques in Algorithms and Data Analysis},
  edition      = {Second},
  publisher    = {Cambridge University Press},
  year         = {2017},
  doi          = {10.1017/9781316651124},
  isbn         = {978-1-107-15488-9}
}

\appendix
\section{Proof Sketch of DP Faithfulness}
\label{sec:dijkstra}

\begin{observation}
\label{obs:dijkstra}
    $\dpt(v,i,j) \le \expect{\dpt(u, i + L_{uv}, j+ C_{uv})}$ for all $uv \in E$, with equality when $u$ is pointed to by $\dpt(v, i, j)$.
\end{observation}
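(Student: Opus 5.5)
The plan is to read the observation directly off the structure of the inner algorithm. First I will check that the key-update expression (\ding{67}) for a fixed edge $uv$ at stage $(i,j)$ is exactly $\expect{\dpt(u, i + L_{uv}, j+ C_{uv})}$, where, as in the derandomization section, the table is extended with value $1$ at every cell with first or second tick index $\ge q$. Indeed, split on the values of $(L_{uv},C_{uv})$. The event $\set{i+L_{uv}\ge q \textrm{ or } j+C_{uv}\ge q}$ lands in a boundary cell of value $1$, giving the first term. The remaining values $(k_\ell,k_c)\in[q-i-1]_0\times[q-j-1]_0$ give the sum. This is well defined only if every $\dpt(u,i+k_\ell,j+k_c)$ with $(k_\ell,k_c)\neq(0,0)$ is already final when the key is computed. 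That holds because the outer loop processes pairs $(i,j)$ in decreasing order, so all states $(\cdot,i',j')$ with $i'\ge i$, $j'\ge j$ and $(i',j')\neq(i,j)$ are completed earlier. The $(0,0)$ summand uses $\dpt(u,i,j)$, which is set when $u$ is popped; the relaxation of $uv$ happens right after that.

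Next I would argue the inequality by a Dijkstra-style argument. Let $f_{uv}(x)=\prob{L_{uv}=0,C_{uv}=0}\,x+\beta_{uv}$ denote the expression (\ding{67}) viewed as a function of $x=\dpt(u,i,j)$. It is affine with slope in $[0,1)$ and satisfies $f_{uv}(x)\ge x$ whenever $x\le 1$: all other summands are at least $\prob{\cdot}\cdot x$ in the \emph{proper} Dijkstra sense, since by induction DP values are monotone. Concretely, I would show $f_{uv}(x)\ge x$ using $\dpt\le 1$ and the claim that $\dpt(u,i',j')\ge \dpt(u,i,j)$ for $i'\ge i$, $j'\ge j$. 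That claim follows because step (c) initializes each key to at most $\min(\dpt(v,i+1,j),\dpt(v,i,j+1))$ and keys only decrease. Hence keys are non-decreasing along pops, exactly as for non-negative weights. When $u$ is popped, every edge $uv$ with $v$ still queued is relaxed, so $v$'s final value is at most $f_{uv}(\dpt(u,i,j))$. If instead $v$ was popped before $u$, then $\dpt(v,i,j)\le\dpt(u,i,j)\le f_{uv}(\dpt(u,i,j))$ by the pop-order monotonicity. Either way $\dpt(v,i,j)\le\expect{\dpt(u,i+L_{uv},j+C_{uv})}$. For $v=s$ the bound is trivial since $\dpt(s,i,j)=0$.

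For the equality case I will use the fact that $\dpt(v,i,j)$ points to $u\neq v$ only if the last successful key update of $v$ came from relaxing $uv$. Its final key, which is its DP value since it is never changed afterwards, then equals $f_{uv}(\dpt(u,i,j))=\expect{\dpt(u,i+L_{uv},j+C_{uv})}$. The self-pointer case (label \texttt{length}/\texttt{cost}) is not an edge and corresponds to the equality $\dpt(v,i,j)=\dpt(v,i+1,j)$ (resp.\ $\dpt(v,i,j+1)$) used in the derandomization.

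The main obstacle is the monotonicity step, namely that pop order is non-decreasing in key, which requires $f_{uv}(x)\ge x$. I expect this is where the monotonicity of $\dpt$ in $(i,j)$, proved by induction over the outer loop, must be invoked carefully, including the boundary convention that extended cells equal $1\ge$ every DP value.
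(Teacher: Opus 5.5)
Your proposal is correct and follows essentially the same route as the paper's sketch. Both arguments derive monotonicity of $\dpt$ in the tick indices from the initialization in step (c). Both then use it to show each relaxation key is at least the popped key, so pops are non-decreasing. Both finish with the standard Dijkstra argument. Your version simply spells out what the paper compresses into ``an induction on the DP states'': the identification of (\ding{67}) with $\expect{\dpt(u,i+L_{uv},j+C_{uv})}$ on the extended table, the two pop-order cases, and equality via the last successful key update.
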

\begin{proof}[Proof sketch]
    Observe from the way $\dpt(u, x, y)$ is initialized, that $\dpt(u,x,y) \le \dpt(u,x+x',y+y')$ for all $x,y,x',y' \ge 0$ and all $u \in V$.
    This implies that the keys from $\cQ$ are popped in a non-decreasing order, since key updates are of the form $\dpt(v, x, y) \gets \sum_{x',y'} P_{uv,x',y'} \dpt(u, x+x', y+y') + P \ge \dpt(u,x,y)$ where $u$ has been popped off $\cQ$, the $P$ terms are probabilities corresponding to disjoint events summing to $1$, and $1 \ge \dpt(\cdot,\cdot,\cdot)$.
    An induction on the DP states completes the proof.
\end{proof}

\section{Sketch for Extension to $k$ Resources}
\label{sec:k-resources}
To attain \Cref{thm:general}, we let $q = A_k \eps^{-2}$ for some sufficiently large constant $A_k$.
We modify the algorithm to use $k$ clocks.
Consequently, the outer loop iterates $O(q^k)$ times, and each key update takes $O(q^k)$ time, whence the running time follows.
Completeness and soundness will now use a union bound over $\le k + 1$ summands, whose magnitudes are controlled by $A_k$.

\end{document}